\newtheorem{thm}{Theorem}
\newtheorem{lem}[thm]{Lemma}
\newtheorem{prop}[thm]{Proposition}
\newtheorem{cor}[thm]{Corollary}
\newtheorem{defini}[thm]{Definition}
\newenvironment{defi}{\begin{defini}\rm}{\end{defini}}
\newtheorem{rema}[thm]{Remark}
\newcommand{\field}[1]{\mathbb{#1}}
\newcommand{\F}{\field{F}}
\newcommand{\N}{\field{N}}
\newcommand{\R}{\field{R}}
\newcommand{\bs}{\boldsymbol}
\newcommand{\C}{\mathcal{C}}
\newcommand{\rk}{\mathrm{rank}}
\newcommand{\Di}{{D}_i}
\newcommand{\Ci}{{C}_i}
\newcommand{\Fq}{\mathbb{F}_q}
\newcommand{\Fm}{\mathbb{F}_{q^N}}
\newcommand{\E}{\mathcal E}
\begin{document}
\title{Random Construction of Partial MDS Codes}

\author{Alessandro Neri and Anna-Lena Horlemann-Trautmann}
\date{}

\maketitle

\begin{abstract}
  This work deals with partial MDS (PMDS) codes, a special class of locally repairable codes, used for distributed storage system. We first show that a known construction of these codes, using Gabidulin codes, can be extended to use any maximum rank distance code. Then we define a standard form for the generator matrices of PMDS codes and use this form to give an algebraic description of PMDS generator matrices. This implies that over a sufficiently large finite field a randomly chosen generator matrix in PMDS standard form generates a PMDS code with high probability. This also provides sufficient conditions on the field size for the existence of PMDS codes.
  
\end{abstract}


\section{Introduction}

In a distributed storage system a file $x\in \F_q^k$ (where $\F_q$ denotes the finite field of cardinality $q$), is encoded and stored  as some codeword $c\in \F_q^n$, over several storage nodes. Each of these nodes is assumed, for simplicity, to store exactly one coordinate of $c$. In case some of the nodes fail, we want to be able to recover the lost information using as little effort as possible.
The \emph{locality} of a code plays an important role in this context, and it denotes the number of nodes one has to contact for repairing a lost node. We call the set of nodes one has to contact if a given node fails, the locality group of that node.  
For this work we assume that the locality groups are distinct. \emph{Partial MDS codes} are \emph{maximally recoverable} codes in this setting, i.e., any erasure pattern that is information theoretically correctable is correctable with such a code. 

It is known that maximally recoverable codes in general, and PMDS codes in particular, exist for any locality configuration if the field size is large enough \cite{ch07}. Furthermore, some constructions of PMDS codes are known, e.g. \cite{bl13,bl14,bl16,ch15,go14}.

In this paper we describe a random construction of PMDS codes, by prescribing a generator matrix of the respective code in a specific form, which we will call \emph{PMDS standard form}. If we fill the non-prescribed coordinates of this generator matrix  with random values, by high probability, the resulting code is PMDS, if the underlying field size is large enough. We derive a lower bound on this probability (depending on the field size). This gives rise to a lower bound on the necessary field size for PMDS codes to exist. With some final adjustments on the random construction we get a lower bound that improves the one of \cite{ch07}.

\section{Preliminaries}


\subsection{PMDS codes}

Consider a distributed storage system with $m$ disjoint locality groups, where the $i$-th group is of size $n_i$ ($i=1,\dots,m$) and can correct any $r_i$ erasures. First we set the locality for the  code to be $\ell \in \mathbb N$. We can divide the coordinates of the code  into blocks of length $n_1, \dots, n_m$, where $n_i=\ell+r_i$, such that each block represents a locality group. 

We denote an MDS code of length $n$ and dimension $k$ by $[n,k]$-MDS code. We use the definition of PMDS codes given in \cite{ht17}, which generalizes the definition of \cite{bl13}.

\begin{defi}
 Let $\ell,m, r_1,\dots,r_m \in \mathbb N$. Define $n:=\sum_{i=1}^m (r_i+\ell)$ and let $C\subseteq \F_q^n$ be a linear code of dimension $k<n$
 with generator matrix
 \begin{equation}\label{eq:genmatrix}
G= \left( B_1 \mid \dots \mid B_m \right) \in \F_q^{k\times n}
\end{equation}
such that $B_i\in \F_q^{k\times ( r_i + \ell)}$. 
 Then $C$ is a $[n,k, \ell ; r_1,\dots,r_m]$-\emph{partial-MDS (PMDS) code} (with locality $\ell$) if 
 \begin{itemize}
 \item 
 for $i\in \{1,\dots,m\}$ the row space of $B_i$ is a $[r_i+\ell, \ell ]$-MDS code, and 
 \item
for any $r_i$ erasures in the $i$-th block ($i=1,\dots, m$), the remaining code (after puncturing the coordinates of the erasures) is a $[m\ell, k]$-MDS code.
\end{itemize}
\end{defi}

The erasure correction capability of PMDS codes is as follows:

\begin{lem}\label{lem:PMDScap}\cite[Lemma 3]{ht17}
A $[n,k, \ell ; r_1,\dots,r_m]$-PMDS code can correct any $r_i$ erasures in the $i$-th block (simultaneously) plus $s:= m\ell -k$ additional erasures anywhere in the code.
\end{lem}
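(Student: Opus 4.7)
The approach is to exploit the definition of PMDS directly, reducing the claim to the MDS property after a careful puncturing. First I would count the erasures: the total is $\sum_{i=1}^m r_i + s = n - m\ell + s = n-k$, so exactly $k$ positions remain unerased. Correctability of a given erasure pattern is equivalent to the corresponding $k$ columns of $G$ being linearly independent, so the goal reduces to showing that the unerased index set $S$ forms an information set of $C$.

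Next I would exploit the flexibility in the second bullet of the PMDS definition. Let $E$ denote the full set of erased positions; by assumption $E$ contains at least $r_i$ positions in block $i$ for every $i$, so I can select a subset $\tilde E \subseteq E$ with exactly $r_i$ positions in each block $i$. By the PMDS property, puncturing $C$ at $\tilde E$ yields a $[m\ell,k]$-MDS code $C'$, whose generator matrix consists of the columns of $G$ outside $\tilde E$.

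Finally, since $S$ is disjoint from $E$ and $\tilde E \subseteq E$, the set $S$ is disjoint from $\tilde E$. Hence the $|S|=k$ columns of $G$ indexed by $S$ are a subset of the columns of the generator matrix of $C'$. By the MDS property of $C'$, any $k$ of its columns are linearly independent, so the columns indexed by $S$ are linearly independent, making $S$ an information set of $C$. I do not anticipate a real obstacle here: the key observation is simply that $s = m\ell - k$ is exactly the maximum number of erasures an $[m\ell,k]$-MDS code can correct, so the PMDS axiom is tailor-made to produce the lemma; the only careful step is choosing $\tilde E$ inside $E$ in a way that licenses the application of that axiom.
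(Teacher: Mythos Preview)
Your argument is correct and is exactly the natural proof: pick $r_i$ of the erased positions in each block, puncture there to obtain the $[m\ell,k]$-MDS code guaranteed by the PMDS definition, and then use that any $k$ of its columns are independent to conclude that the $k$ surviving columns of $G$ form an information set. The paper does not actually supply its own proof of this lemma; it simply cites \cite[Lemma~3]{ht17}, so there is nothing further to compare against.
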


We can see that the definition of PMDS codes given makes sense only for $k\geq \ell$. In case of equality, or in the case that $m=1$
 there exist only trivial PMDS codes, i.e. the only PMDS codes are MDS codes.


It was shown in \cite{ht17} that a code is a $[n,k, \ell; r_1,\dots,r_m]$-PMDS code if and only if it is maximally recoverable (for the respective locality group configuration). The same results had previously been shown in \cite[Lemma 4]{go14} for the case $r_1=r_2=\dots = r_m$.

Now we give a summary on  known results about PMDS codes.


\begin{prop}\cite{ch07}\label{prop4}
Maximally recoverable (MR) codes of length $n$ and dimension $k$ exist for any locality configuration over any finite field of size $q> \binom{n-1}{k-1}$.
\end{prop}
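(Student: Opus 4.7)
The plan is to construct a generator matrix $G=(B_1\mid\cdots\mid B_m)\in\F_q^{k\times n}$ column-by-column via a greedy union-bound argument in the spirit of classical MDS existence proofs. First, I would translate the MR property into a finite list of linear-algebra conditions on the columns $g_1,\ldots,g_n$ of $G$: for every $k$-subset $T\subseteq\{1,\ldots,n\}$ whose complement is a correctable erasure pattern of size $n-k$ (i.e., contains at most $r_i$ indices from block $i$), the columns $\{g_t:t\in T\}$ must be linearly independent. By Lemma \ref{lem:PMDScap} and the equivalence between MR and PMDS, this collection of independence conditions is necessary and sufficient. Note that the intra-block MDS condition is a special instance, demanding that the $\ell+r_j$ columns of $B_j$ lie in a common $\ell$-dimensional subspace with any $\ell$ of them independent.

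Next, I would pre-commit to $\ell$-dimensional subspaces $V_1,\ldots,V_m\subseteq\F_q^k$ (taken in sufficiently general position) to serve as the column spaces of the $B_j$, and then add the columns one at a time in a fixed order, choosing $g_i$ from the subspace $V_{b(i)}$ corresponding to the block containing index $i$. Before adding $g_i$, each critical $k$-subset $T\ni i$ whose other $k-1$ elements are already chosen contributes one forbidden linear subspace, namely $\mathrm{span}(g_s:s\in T\setminus\{i\})\cap V_{b(i)}$; generically this intersection has dimension at most $\ell-1$ and thus sits inside a hyperplane of $V_{b(i)}\cong\F_q^\ell$. The number of such critical subsets with the remaining indices drawn from $\{1,\ldots,i-1\}$ is at most $\binom{i-1}{k-1}\leq\binom{n-1}{k-1}$. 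Since the union of $N$ hyperplanes in $\F_q^\ell$ contains at most $N(q^{\ell-1}-1)$ nonzero vectors, which is strictly less than $q^\ell-1$ whenever $q>N$, a valid $g_i$ exists at every step provided $q>\binom{n-1}{k-1}$.

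The main obstacle is justifying that the pre-committed subspaces $V_1,\ldots,V_m$ can indeed be chosen so that (a) no forbidden subspace in the greedy step degenerates and swallows all of $V_{b(i)}$, and (b) a generic intra-block MDS code can be filled in within each $V_j$ without extra obstructions. Both amount to avoiding finitely many degenerate configurations among the $V_j$, controlled once again by the combinatorial quantity $\binom{n-1}{k-1}$, so the same field-size hypothesis suffices. Iterating the greedy construction for $i=1,\ldots,n$ then produces an MR generator matrix over any $\F_q$ with $q>\binom{n-1}{k-1}$.
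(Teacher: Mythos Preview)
The paper does not prove Proposition~\ref{prop4}; it is quoted from \cite{ch07} as a known result, so there is no in-paper argument to compare against directly. That said, the step-by-step construction the paper develops later (leading to Corollary~\ref{cor25}) is very close in spirit to your greedy scheme, and the comparison is instructive.

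Your proposal has a genuine gap exactly where you flag ``the main obstacle''. You pre-commit to the subspaces $V_1,\dots,V_m$ and then need, at every greedy step, that $\mathrm{span}(g_s:s\in T\setminus\{i\})\cap V_{b(i)}$ is a proper subspace of $V_{b(i)}$. Since $\sum_j\dim V_j=m\ell>k$, the $V_j$ cannot lie in direct sum inside $\F_q^k$; hence columns from blocks $j\neq b(i)$ can contribute nontrivially to $V_{b(i)}$, and whether the intersection is proper depends on the \emph{specific} columns $g_s$ already chosen, not merely on the position of the $V_j$. There is therefore no way to pre-commit the $V_j$ so as to rule out degeneracy without reference to the $g_s$, and your closing sentence (``controlled once again by $\binom{n-1}{k-1}$'') supplies no mechanism for this---it is circular.

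The fix, which is essentially what \cite{ch07} does and what the paper does in Section~\ref{sec:topprob}, is not to pre-commit abstract $V_j$ but to first lay down an $[m\ell,k]$-MDS generator $(C_1\mid\cdots\mid C_m)$ and set $V_j:=\mathrm{colspan}(C_j)$; then append the remaining $D$-columns greedily. With this ordering, whenever you add a column $i$ in block $j$, all of $C_j$ is already present, and any critical $T\ni i$ omits at least one already-placed index $j'\in J_j$. By induction $(T\setminus\{i\})\cup\{j'\}$ is an independent critical $k$-set, so $g_{j'}\in V_j\setminus\mathrm{span}(T\setminus\{i\})$, forcing the intersection to be a genuine hyperplane of $V_j$. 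The union bound with $\binom{n-1}{k-1}$ then goes through exactly as you wrote. If you patch your outline this way it becomes a correct proof---indeed it yields the sharper bound $M^*$ of Corollary~\ref{cor25}.
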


MR codes are PMDS codes for disjoint locality blocks. Therefore, Proposition \ref{prop4}  implies that PMDS codes exist for any set of parameters when the field size is large enough.

A construction of PMDS codes based on rank-metric and MDS codes was given in \cite{ca17}, when $r_1=r_2=\dots=r_m$. This gives the following existence result:
\begin{prop}\cite{ca17}
$[n,k,\ell; r,\dots,r]$-PMDS codes with $m$ locality blocks of the same length exist over a finite field of size  $q^{n-mr}$.
\end{prop}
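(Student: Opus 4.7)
The claim is an existence result, so I would give an explicit construction. Since the preceding sentence identifies the construction of \cite{ca17} as being based on rank-metric and MDS codes, the idea is to combine local MDS codes with a global MRD code. Let $N = n - mr = m\ell$, and fix a generator matrix $G_{\mathrm{loc}} \in \F_q^{\ell \times (\ell+r)}$ of an $[\ell+r, \ell]$-MDS code over $\F_q$, together with a generator matrix $G_G \in \F_{q^N}^{k \times m\ell}$ of an $[m\ell, k]$-MRD code (for instance a Gabidulin code, which exists since the extension degree equals the code length $m\ell$). Define the candidate PMDS code $C \subseteq \F_{q^N}^n$ as the row space of
$$G \;=\; G_G \cdot \mathrm{diag}(G_{\mathrm{loc}}, \dots, G_{\mathrm{loc}}).$$

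To verify PMDS, I would check three things. First, $\dim C = k$, because the block-diagonal factor has full row rank $m\ell \geq k$ and left-multiplication by $G_G$ (rank $k$) preserves rank. Second, the restriction of $C$ to each block is an $[\ell+r,\ell]$-MDS code: every MRD code is also Hamming-MDS, and since $k \geq \ell$, the projection of $\mathcal{C}_G$ onto the $\ell$ coordinates of any block is surjective, so the $i$-th block of $C$ is exactly the row space of $G_{\mathrm{loc}}$ scalar-extended to $\F_{q^N}$.

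The substantive step is the global condition. Fix erasure sets $E_i$ of size $r$ in each block together with $s = m\ell - k$ additional erasures $F = \bigsqcup_i F_i$, and let $J_i$ denote the surviving positions of block $i$, so $|J_i| = \ell - |F_i|$ and $\sum_i |J_i| = k$. Assume a codeword $c = (u_1, \dots, u_m)\cdot\mathrm{diag}(G_{\mathrm{loc}},\dots,G_{\mathrm{loc}})$ with $(u_1,\dots,u_m) \in \mathcal{C}_G$ vanishes on $\bigsqcup_i J_i$. Vanishing in block $i$ forces $u_i$ into the left kernel of $G_{\mathrm{loc}}|_{J_i}$; crucially, since $G_{\mathrm{loc}}$ has entries in $\F_q$, this kernel is $\F_q$-rational, of the form $W_i \otimes_{\F_q} \F_{q^N}$ with $W_i \subseteq \F_q^\ell$ an $\F_q$-subspace of dimension $|F_i|$. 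Expanding $u_i$ in a basis of $W_i$ with coefficients in $\F_{q^N}$, every coordinate of $(u_1,\dots,u_m) \in \F_{q^N}^{m\ell}$ becomes an $\F_q$-linear combination of the $\sum_i |F_i| = s$ expansion coefficients. Hence the rank weight of $(u_1,\dots,u_m)$ is at most $s$; but $\mathcal{C}_G$ has minimum rank distance $m\ell - k + 1 = s+1$, forcing $(u_1,\dots,u_m) = 0$ and thus $c = 0$.

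The main obstacle, and the reason $q^{n-mr}$ is the correct field size, is precisely this rank-metric step. Its success rests on the local code being $\F_q$-rational, so that the kernels $W_i \otimes \F_{q^N}$ are themselves $\F_q$-rational and the MRD minimum-distance bound applies. Replacing $G_{\mathrm{loc}}$ by a generic MDS code defined only over $\F_{q^N}$ would destroy the $\F_q$-rationality and break the argument; in this sense the construction is tight about the interplay between local and global components, and any implicit hypothesis on the base $q$ (e.g.\ $q \geq \ell + r - 1$) simply ensures that a local $\F_q$-rational MDS code is available in the first place.
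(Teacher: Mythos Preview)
Your construction and verification are correct and coincide with the approach the paper itself uses when it proves the generalization in Theorem~\ref{thm16}. The only difference is cosmetic: the paper checks invertibility of each admissible $k\times k$ submatrix $\widetilde{G}\widetilde{M}$ directly via the MRD criterion of Proposition~\ref{prop:MRDCrit} (the block matrix $\widetilde{M}$ has full $\F_q$-rank $k$, hence $\det(\widetilde{G}\widetilde{M})\neq 0$), whereas you argue contrapositively on codewords, translating a putative nonzero codeword vanishing on the surviving positions into an element of the MRD code with rank weight at most $s<d_R$. These are the two standard dual phrasings of the same MRD fact, so there is no substantive divergence.
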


Furthermore, some specific constructions of PMDS codes, for particular values of $s$ or of the $r_i$, are given in 
\cite{bl13,bl14,bl16,go14}. 

In particular, a general construction for PMDS codes with $s=1$  was given in \cite{ht17}.
This construction is based on the concatenation of several MDS codes as building blocks. 

\begin{prop}\cite[Corollary 14]{ht17}
\begin{enumerate}
\item
For any integers $m\geq 2 $ and $\ell,r_1,\dots,r_m \geq 1$ there exists a $[n,k=m\ell-1, \ell; r_1,\dots,r_m]$-PMDS code over any field $\F_q$ with $q\geq \max_i\{r_i\}+\ell$.

\item
If there exists $h\in \mathbb N$ such that $\ell\in \{3,2^h-1\}$ and $\max_i \{r_i\} +\ell = 2^h+1$, then there exists a  $[n,k=m\ell-1, \ell; r_1,\dots,r_m]$-PMDS code over $\F_q$ with $q=2^h= \max_i\{r_i\}+\ell -1$.

\end{enumerate}
\end{prop}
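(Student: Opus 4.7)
Since $s := m\ell - k = 1$, the PMDS condition (via Lemma~\ref{lem:PMDScap}) demands that after deleting any $r_i$ coordinates from block $i$ the remaining $[m\ell,m\ell-1]$ code be MDS; equivalently, its one-dimensional dual must be supported everywhere. The plan is therefore to construct $C$ as the intersection of the local parity constraints of $m$ MDS codes together with exactly one additional global parity constraint.

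For part (1), I would pick for each $i$ an $[r_i+\ell,\ell]$-MDS code $\mathcal{L}_i$ over $\F_q$; one exists since $r_i+\ell\leq q$ (e.g.\ a Reed--Solomon code), and its dual $\mathcal{L}_i^\perp$ is $[r_i+\ell,r_i]$-MDS. For linear functionals $\psi_i\colon \F_q^{r_i+\ell}\to \F_q$ with $\psi_i|_{\mathcal{L}_i}\neq 0$, set
\begin{equation*}
C := \left\{(c_1,\dots,c_m) \in \mathcal{L}_1 \oplus \cdots \oplus \mathcal{L}_m \ :\ \sum_{i=1}^m \psi_i(c_i) = 0\right\}.
\end{equation*}
Then $\dim C = m\ell - 1$, and since each $\psi_i|_{\mathcal{L}_i}$ is a nonzero linear form onto $\F_q$ and hence surjective, the projection of $C$ onto block $i$ equals $\mathcal{L}_i$, so every block of the resulting generator matrix spans an $[r_i+\ell,\ell]$-MDS code.

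Next I would analyse the remaining (global) PMDS condition. Fix a puncturing pattern $J_i^c \subseteq \{1,\dots,r_i+\ell\}$ with $|J_i^c|=r_i$, and set $T=\bigcup_i J_i^c$. The dual of the punctured code equals the shortening $(C^\perp)_T$, i.e.\ the subspace of $C^\perp = \bigoplus_i \mathcal{L}_i^\perp \oplus \langle \psi \rangle$ consisting of vectors whose support avoids $T$, restricted to $[n]\setminus T$. Since $\mathcal{L}_i^\perp$ has minimum distance $\ell+1 > \ell = |J_i|$, no nonzero element of $\mathcal{L}_i^\perp$ is supported in $J_i$; combined with the fact that the restriction map $\mathcal{L}_i^\perp \to \F_q^{J_i^c}$ is bijective (by the MDS property), this shows that $(C^\perp)_T$ is spanned by a single explicit vector $c^\perp_{(J)}$ obtained from $\psi=(\psi_1,\dots,\psi_m)$ by subtracting, in each block $i$, the unique $v_i\in \mathcal{L}_i^\perp$ that agrees with $\psi_i$ on $J_i^c$. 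The code $C$ is PMDS iff $c^\perp_{(J)}$ has no zero coordinate, for every such puncturing pattern.

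To secure this non-vanishing I would parametrise all the $\mathcal{L}_i$'s and $\psi_i$'s by a single Reed--Solomon setup: fix $\max_i r_i + \ell \leq q$ distinct points of $\F_q$, take $\mathcal{L}_i$ to be the RS code on the first $r_i+\ell$ of them, and take $\psi_i$ to be evaluation at one common additional point. The coordinates of $c^\perp_{(J)}$ then become explicit Lagrange-interpolation (equivalently, Cramer-rule) expressions in the chosen points, whose non-vanishing reduces to that of Vandermonde-type minors and follows from distinctness of the points. The main obstacle is precisely this step: showing that a single universal choice of the global row $\psi$ forces $c^\perp_{(J)}$ to be nowhere zero for all $J$ simultaneously. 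Part (2) follows by replacing the Reed--Solomon building blocks with $[2^h+1,\ell]$-MDS codes over $\F_{2^h}$, which exist precisely when $\ell\in\{3,2^h-1\}$ (from the doubly-extended Reed--Solomon / conic constructions and their duals); the inequality $r_i+\ell\leq 2^h+1$ leaves enough room, and the rest of the argument is unchanged.
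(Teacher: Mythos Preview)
The paper does not prove this proposition; it is quoted from \cite[Corollary~14]{ht17} and accompanied only by the remark that the construction there ``is based on the concatenation of several MDS codes as building blocks.'' So there is no proof in the present paper to compare against.

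Your outline is in exactly that spirit: take a direct sum of local MDS codes $\mathcal{L}_1\oplus\cdots\oplus\mathcal{L}_m$ and cut it down by one global linear relation. Your reduction of the PMDS property to the statement that, for every admissible puncturing pattern $J$, the single dual word $c^\perp_{(J)}$ has full support is correct and is indeed the crux.

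That said, the proposal is not a proof, and you say so yourself. The sentence ``the main obstacle is precisely this step'' names the gap rather than closing it: you assert that the coordinates of $c^\perp_{(J)}$ are Lagrange/Cramer expressions whose non-vanishing ``reduces to Vandermonde-type minors,'' but you do not write these expressions down or check that none of them can vanish for \emph{any} choice of $J$. That computation is exactly the content of the result in \cite{ht17}; without it your argument is an outline. There is also a bookkeeping point in part~(1): evaluating $\psi_i$ at ``one common additional point'' on top of the $\max_i r_i+\ell$ evaluation points already used for the largest block requires either $q\geq \max_i r_i+\ell+1$ or the use of the point at infinity in an extended Reed--Solomon code; you should state which option you take and confirm that the non-vanishing argument still goes through with that choice. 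Part~(2) inherits the same missing verification.
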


In \cite{ht17} the authors also show that this construction is basically the only one possible, i.e., every PMDS with $s=1$ is of this form, giving thus a characterization for this set of parameters. However, for $s\geq 2$ there is no characterization yet for PMDS codes.

\subsection{Zarisky topology over finite fields}

Let $\F$ be a field, and $\F[x_1,\ldots,x_N]$ be the polynomial ring over $\F$. Denote by $\bar{\F}$ the algebraic closure of $\F$. For a subset $S\subseteq \F[x_1,\ldots,x_N]$ we define the \emph{algebraic set} 
$$  V(S): = \{\bs \alpha \in \bar\F_q^r \mid f(\bs \alpha) = 0, \forall f \in S\} . $$

The \emph{Zariski topology} on $\bar{\F}^N$ is defined as the topology whose closed sets are the algebraic sets, while the complements of the Zariski-closed sets are the \emph{Zariski-open sets} \cite[Ch. I, Sec. 1]{ha13b}.

\begin{defi}
A subset  $A\subset\bar{\F}^N$ is called a \emph{generic set} if $A$ contains a non-empty Zariski-open set. 
\end{defi}

In classical geometry one studies the Zariski topology over the complex numbers. In this framework, a generic set inside $\mathbb C^N$ is dense and its complement is contained in an algebraic set of dimension at most $N-1$.

If one wants to consider generic sets restricted to a finite field $\F_q$, the situation is slightly different. Here, for a subset $T\subseteq\F_q^N$ one can always find a set of polynomials 
 $S\subseteq \F_q[x_1,\dots,x_N]$ such that 
$$   T=\{\bs \alpha \in \F_q^N \mid f(\bs \alpha) = 0, \forall f \in S\}.$$
and therefore the Zariski topology restricted to $\F_q^N$ is the discrete topology. This means that it is not useful to extend the notion of generic sets to finite fields since it would not give any  information.

However, given a set of polynomials $S\subseteq\F_q[x_1,\ldots,x_N]$,  we can  define the set of \emph{$\F_q$-rational points} as
$$ V(S;\F_{q}): = \{\bs \alpha \in  \F_{q}^N \mid f(\bs \alpha) = 0, \forall f \in S\} . $$
In this setting the Schwartz-Zippel Lemma 
 implies an analog result to the one of generic sets, as explained in the following.

\begin{lem}[Schwartz-Zippel]\cite[Lemma 1.1]{le98p}\label{lem:SZ}
  Let $f\in \F_q[x_1,\dots,x_N]$ be a non-zero polynomial of total
  degree $d \geq 0$. Let $T\subseteq \bar{\F}$ be a finite set and let $\alpha_1,   \dots, \alpha_N$ be selected at random independently and uniformly  from $T$. Then
  $$\Pr\big(f(\alpha_1,\ldots,\alpha_N)=0\big)\leq\frac{d}{|T|}. $$ 
\end{lem}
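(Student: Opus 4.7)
The plan is to proceed by induction on the number of variables $N$. For the base case $N=1$, the polynomial $f$ is a non-zero univariate polynomial of degree $d$ over a field, so by the factor theorem it has at most $d$ roots. Since $\alpha_1$ is drawn uniformly from $T$, the probability that $f(\alpha_1)=0$ is then at most $d/|T|$, which is exactly the desired bound.

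For the inductive step, assuming the bound for polynomials in fewer than $N$ variables, I would write $f$ as a polynomial in $x_1$ with coefficients in $\F_q[x_2,\ldots,x_N]$, namely
$$ f(x_1,\ldots,x_N) = \sum_{i=0}^{k} x_1^i\, g_i(x_2,\ldots,x_N), $$
where $k \leq d$ is the largest exponent of $x_1$ for which $g_k$ is not identically zero. Since $f$ has total degree $d$, the leading coefficient $g_k$ has total degree at most $d-k$, which is the key observation that makes the induction close.

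Next I would condition on the outcome of $(\alpha_2,\ldots,\alpha_N)$ and split into two cases. If $g_k(\alpha_2,\ldots,\alpha_N)=0$, the inductive hypothesis applied to $g_k$ (a polynomial in $N-1$ variables of total degree at most $d-k$) bounds the probability of this event by $(d-k)/|T|$. Otherwise, $f(x_1,\alpha_2,\ldots,\alpha_N)$ is a non-zero univariate polynomial in $x_1$ of degree exactly $k$, so by the base case the conditional probability that $f(\alpha_1,\ldots,\alpha_N)=0$ is at most $k/|T|$. Combining the two contributions via the law of total probability yields
$$ \Pr\bigl(f(\alpha_1,\ldots,\alpha_N)=0\bigr) \leq \frac{d-k}{|T|} + \frac{k}{|T|} = \frac{d}{|T|}, $$
closing the induction. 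I do not anticipate a serious obstacle: the only care point is to ensure that the decomposition in $x_1$ isolates a genuinely non-zero leading coefficient, so that both the degree bound $\deg g_k \leq d-k$ and the fact that the restricted univariate polynomial has degree exactly $k$ in the second case remain valid.
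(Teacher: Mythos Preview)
Your argument is the standard and correct proof of the Schwartz--Zippel Lemma by induction on the number of variables; the base case and the conditioning step are both handled properly, and the degree bookkeeping $\deg g_k \le d-k$ is exactly what makes the induction close. Note, however, that the paper does not supply its own proof of this lemma: it is simply quoted from the literature (as \cite[Lemma~1.1]{le98p}) and used as a black box, so there is no paper proof to compare against. Your write-up is a faithful reconstruction of the classical argument and would serve perfectly well as a self-contained proof were one desired.
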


As a consequence of this result we have that, in case the size of $S$ and the total degrees of the polynomials in $S$ do not depend on the finite field, the proportion between the cardinality of $ V(S;\F_{q})$ and the cardinality of the whole space $\F_q^N$ goes to $0$ as $q$ grows. Vice versa, for growing $q$ the probability that a random point is in the complement of $ V(S;\F_{q})$ tends to $1$. This result will be crucial in Section \ref{sec:topprob} for our random construction of PMDS codes.

\subsection{Rank-metric codes}

We now give some known facts about rank-metric codes. Recall that $\Fm$ is isomorphic to $\Fq^N$ as an $\Fq$- vector space. From this it easily follows that $\Fm^n\cong \Fq^{N\times n}$. Then we can give the following definition.

\begin{defi}
  The \emph{rank distance} $d_R$ on $\F_q^{N\times n}$ is defined by
  \[d_R(U,V):= \rk(U-V) , \quad U,V \in \F_q^{N\times n}. \] 
  Analogously, if
  $\boldsymbol u,\boldsymbol v \in \Fm^n$, then $d_R(\boldsymbol u,\boldsymbol v)$ is the rank of the
  difference of the respective matrix representations in
  $\F_q^{N\times n}$.
\end{defi}

Observe that the definition of rank distance in the case of vectors in $\Fm^n$ does not depend on the choice of the basis. Moreover it can be shown that the function $d_R: \Fm^n \times \Fm^n \rightarrow \R_{\geq 0}$ is a metric.
\begin{defi}
 An \emph{$\Fm$-linear rank-metric code} $\C$ of length $n$ and dimension $k$ is a $k$-dimensional subspace of $\Fm^n$ equipped with the rank distance. 
 The \emph{minimum distance} of $\C$ is defined as 
$$d_R(\C):=\min\left\{d_R(\bs u,\bs v) \mid \bs u,\bs v\in \C, \bs u\neq \bs v\right\}.$$
\end{defi}


\begin{thm}[Singleton-like Bound]\label{th:SB}\cite[Theorem 1]{ro91}
Let $\C\subseteq \Fm^n$ be an $\Fm$-linear rank-metric code of dimension $k$. Then
$$d_R(\C)\leq n-k+1. $$
\end{thm}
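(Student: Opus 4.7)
The plan is to reduce the rank-metric Singleton bound to the classical Hamming-metric Singleton bound via the observation that rank weight is always bounded above by Hamming weight. Concretely, for any vector $\bs v \in \Fm^n$, if we fix an $\Fq$-basis of $\Fm$ and write $\bs v$ as an $N \times n$ matrix $M_{\bs v}$ over $\Fq$, then the rank of $M_{\bs v}$ cannot exceed the number of its nonzero columns, because only nonzero columns contribute to the column span. Since the nonzero columns of $M_{\bs v}$ correspond exactly to the nonzero entries of $\bs v$ (as a vector over $\Fm$), we get $\rk(\bs v) \leq \mathrm{wt}_H(\bs v)$, where $\mathrm{wt}_H$ denotes the Hamming weight over $\Fm$.

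The first step, then, is to state and verify this inequality, and to translate it into an inequality of minimum distances by setting $\bs v = \bs u - \bs u'$ for distinct codewords. This yields $d_R(\C) \leq d_H(\C)$, where $d_H(\C)$ is the minimum Hamming distance of $\C$ viewed as an $\Fm$-linear code of length $n$ and dimension $k$.

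The second step is to apply the classical Singleton bound, which states that for any linear code of length $n$ and dimension $k$ over any field, $d_H(\C) \leq n - k + 1$. Combining the two inequalities gives $d_R(\C) \leq n - k + 1$, which is the desired bound.

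The argument is essentially a one-line reduction, so there is no real obstacle; the only thing that must be handled carefully is verifying that the rank--weight inequality is independent of the chosen $\Fq$-basis of $\Fm$ (which is precisely the observation already made in the excerpt, namely that the rank distance of vectors in $\Fm^n$ does not depend on the basis). An alternative, self-contained route would avoid citing the classical Singleton bound and instead argue directly by puncturing: if $d_R(\C) = d$, then projecting $\C$ onto any $n - d + 1$ coordinates is injective, since a nonzero codeword in the kernel would be supported on at most $d - 1$ coordinates and therefore have rank at most $d - 1$, contradicting the minimum distance. Injectivity forces $k \leq n - d + 1$, giving the same conclusion.
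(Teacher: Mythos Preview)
Your argument is correct. Both routes you describe --- the reduction via $\rk(\bs v)\leq \mathrm{wt}_H(\bs v)$ followed by the classical Singleton bound, and the direct puncturing argument --- are standard and valid proofs of the rank-metric Singleton bound.

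Note, however, that the paper does \emph{not} supply its own proof of this statement: Theorem~\ref{th:SB} is simply quoted from the literature (Roth, 1991) as background in the preliminaries section, with no accompanying proof environment. So there is nothing to compare your proposal against within the paper itself. Your write-up is a perfectly good self-contained justification should one be desired; the reduction to the Hamming case is in fact essentially the argument in Roth's original paper and in Gabidulin's earlier work.
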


\begin{defi}
Codes attaining the Singleton-like bound are called \emph{Maximum Rank Distance (MRD) Codes}.
\end{defi}

A necessary and sufficient condition for the existence of MRD codes is that $n\leq N$. In this framework, a characterization for $\Fm$-linear MRD codes in terms of their generator matrices was given 
in \cite[Corollary 2.12]{ho16}, which in turn is based on a  result given in
\cite{ga85}. For this we define the set
$$ \E_q(k,n):=\left\{E \in \F_q^{k\times n} \mid \rk(E)=k\right\}. $$

\begin{prop}[MRD criterion]\label{prop:MRDCrit}
  Let $G\in \F_{q^m}^{k\times n}$ be a generator matrix of a
  rank-metric code $\mathcal{C}\subseteq \F_{q^m}^n$. Then
  $\mathcal{C}$ is an MRD code if and only if
$$ \rk(GE^T) =k$$
for all $E\in \E_q(k,n)$.
\end{prop}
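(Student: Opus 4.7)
The plan is to translate the MRD property (a lower bound on rank weights) into a statement about $\Fq$-linear subspaces of $\Fq^n$, and then re-encode such subspaces as elements of $\E_q(k,n)$.

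First, I would record the following identity for the rank weight of a vector $c\in\F_{q^m}^n$:
\[
\rk(c) \;=\; n - \dim_{\Fq} \bigl\{v \in \Fq^n : c v^T = 0\bigr\}.
\]
This is obtained by fixing an $\Fq$-basis of $\F_{q^m}$ and expanding the entries of $c$ into a matrix $M_c \in \Fq^{m\times n}$: the set $\{v : c v^T = 0\}$ is precisely the right kernel of $M_c$, whose dimension equals $n - \rk(M_c) = n - \rk(c)$.

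With this identity in hand, Theorem~\ref{th:SB} tells me that $\mathcal{C}$ is MRD iff every nonzero codeword has rank at least $n-k+1$, which becomes
\[
\dim_{\Fq}\bigl\{v \in \Fq^n : c v^T = 0\bigr\} \;\leq\; k-1 \quad \text{for every } 0\neq c\in\mathcal{C}.
\]
Equivalently, no nonzero $c \in \mathcal{C}$ admits a $k$-dimensional $\Fq$-subspace $V\subseteq\Fq^n$ with $cv^T = 0$ for all $v\in V$. Identifying such a subspace with the row space of a matrix $E\in\E_q(k,n)$, this last condition reads $cE^T=0$; writing $c = xG$ with $x\in\F_{q^m}^k$, it becomes $xGE^T=0$. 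Thus $\mathcal{C}$ is MRD iff for every $E \in \E_q(k,n)$ the only solution of $xGE^T = 0$ is $x=0$, i.e.\ the $k\times k$ matrix $GE^T$ has full rank $k$.

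I do not expect any serious obstacle. The subtlest point is the initial rank identity, which requires some care with row/column conventions when passing from $c\in\F_{q^m}^n$ to the associated $\Fq$-matrix $M_c$; once this is in place, the remainder is just the translation between $k$-dimensional $\Fq$-subspaces of $\Fq^n$ and $\GL_k(\Fq)$-equivalence classes of generator matrices in $\E_q(k,n)$, together with the observation that replacing $E$ by $AE$ with $A\in\GL_k(\Fq)$ multiplies $GE^T$ on the right by $A^T$, so the rank condition does not depend on the choice of representative.
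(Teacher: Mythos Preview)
The paper does not actually prove Proposition~\ref{prop:MRDCrit}; it is quoted as a known result from \cite[Corollary 2.12]{ho16} (building on \cite{ga85}), so there is no in-paper proof to compare against. Your argument is correct and is essentially the standard one: the key rank identity $\rk(c)=n-\dim_{\Fq}\{v\in\Fq^n:cv^T=0\}$ is exactly what lets one rephrase the Singleton bound as a condition on $\Fq$-annihilator subspaces, and the passage from such subspaces to matrices in $\E_q(k,n)$ is the natural parametrisation. Your remark about $\GL_k(\Fq)$-equivalence of representatives is accurate and completes the argument cleanly.
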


\section{General construction using rank metric codes}

In this section we generalize the construction given in \cite{ca17}. In that work the authors use Gabidulin codes in order to build $[n,k, \ell, r,\ldots,r]$-PMDS codes. We will show that this construction also works for different $r_i$, and that  Gabidulin codes can be replaced by any linear MRD codes.

 Fix $n, k,\ell, r_1,\ldots,r_m$, and let $\widetilde{G} \in \Fm^{k\times m\ell}$ be the generator matrix of a MRD code. For the existence of an MRD code we need $N\geq m\ell$. Moreover, for every $i=1,\ldots, m$, we consider a $[\ell+r_i,\ell]$-MDS code over $\Fq$ with generator matrix $M_i$, and define 
\begin{equation}\label{eq:MDSblock}
 M:=\left(\begin{array}{cccc} M_1 & 0 & \ldots & 0 \\
0 & M_2 &\ldots & 0 \\
\vdots &  & \ddots & \vdots \\
0 & \ldots & 0 & M_m
\end{array}\right) \in \Fq^{m\ell \times n}.
\end{equation}
We can now formulate our PMDS construction.

\begin{thm}\label{thm16}
Let $\widetilde{G} \in \Fm^{k\times m\ell}$ be the generator matrix of a MRD code and let $M$ be the matrix defined in (\ref{eq:MDSblock}). Then the matrix $\widetilde{G}M$ is a generator matrix for a $[n,k,\ell,r_1,\ldots,r_m]$-PMDS code over $\Fm$.
\end{thm}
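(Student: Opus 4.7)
The plan is to verify the two defining conditions of a PMDS code for the code generated by $G = \widetilde{G}M$, using the MRD criterion (Proposition~\ref{prop:MRDCrit}) as the main tool. The key preliminary observation is that, by taking $E\in\Fq^{k\times m\ell}$ to be a matrix whose rows are $k$ distinct standard basis vectors, the MRD criterion forces any $k$ columns of $\widetilde{G}$ to be $\Fm$-linearly independent; in particular $\widetilde{G}$ generates an $[m\ell,k]$-MDS code over $\Fm$, and since $\ell\le k$, any $\ell$ columns of $\widetilde{G}$ are $\Fm$-linearly independent as well.

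For the first PMDS condition, I would write $B_i = \widetilde{G}_i M_i$ where $\widetilde{G}_i\in\Fm^{k\times\ell}$ is the $i$-th column block of $\widetilde{G}$. The preliminary observation gives $\rk_{\Fm}(\widetilde{G}_i)=\ell$, so the map $\bs x\mapsto \bs x\widetilde{G}_i$ from $\Fm^{k}$ onto $\Fm^{\ell}$ is surjective, and the row space of $B_i$ coincides with $\Fm^{\ell} M_i$, i.e.\ the scalar extension to $\Fm$ of the row space of $M_i$. Scalar extension of an MDS code is still MDS (any $\ell$ columns of $M_i$ form a matrix that is invertible over $\Fq$ and therefore over $\Fm$), so the row space of $B_i$ is an $[\ell+r_i,\ell]$-MDS code.

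For the second condition, suppose we puncture $r_i$ coordinates in the $i$-th block, retaining the columns indexed by some $J_i\subseteq\{1,\ldots,\ell+r_i\}$ with $|J_i|=\ell$. The punctured generator matrix takes the form $G' = \widetilde{G}\,\widehat{M}$, where $\widehat{M}\in\Fq^{m\ell\times m\ell}$ is the block-diagonal matrix whose $i$-th block is the $\ell\times\ell$ submatrix of $M_i$ formed by the columns in $J_i$. Each such block is invertible over $\Fq$ by the MDS property of $M_i$, so $\widehat{M}$ is invertible. For any $k$-subset $S\subseteq\{1,\ldots,m\ell\}$, the corresponding $k$ columns of $G'$ form the product $\widetilde{G}\,\widehat{M}_S$, where $\widehat{M}_S$ is the column-restriction of $\widehat{M}$ to $S$ and has $\Fq$-rank $k$. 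Taking $E = \widehat{M}_S^{\top}\in\Fq^{k\times m\ell}$ in the MRD criterion then yields $\rk_{\Fm}(\widetilde{G}\,\widehat{M}_S)=k$, hence $G'$ generates an $[m\ell,k]$-MDS code.

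The main obstacle is recognising that the block-diagonal matrix $\widehat{M}$ induced by the chosen erasure pattern is exactly the $\Fq$-matrix of full $\Fq$-rank that can be fed into the MRD criterion; once this reduction is spelled out, everything else is an elementary linear-algebraic consequence of the MDS property of the $M_i$ and the MRD property of $\widetilde{G}$.
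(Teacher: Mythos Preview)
Your argument is correct and essentially the same as the paper's. The only organizational difference is that the paper invokes the characterization of PMDS codes via $\mathcal T_{k,\ell}(G)$ (Proposition~\ref{prop:PMDS}) and shows in one stroke that any $k\times k$ submatrix $S$ of $G$ with at most $\ell$ columns per block factors as $S=\widetilde{G}\widetilde{M}$ with $\widetilde{M}\in\Fq^{m\ell\times k}$ block-diagonal of rank $k$, then applies the MRD criterion; you instead verify the two defining conditions separately, first reducing to the square invertible $\widehat{M}$ and only then selecting $k$ columns. Both routes feed the same full-rank $\Fq$-matrix into Proposition~\ref{prop:MRDCrit}, so the core idea is identical.
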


\begin{proof}
 Let $G:=\widetilde{G}M$ 
 and let $S\in \mathcal T_{k,\ell}(G)$ be the submatrix obtained by selecting columns $h_1,\ldots,h_{k_j}$ from the $j$th block for $j=1,\ldots,m$, where $k_i\leq \ell$ and $k_1+\ldots+k_m=k$. 
 $S$ is equal to $\widetilde{G}\widetilde{M}$, where
$$\widetilde{M}=\left(\begin{array}{cccc} N_1 & 0 & \ldots & 0 \\
0 & N_2 &\ldots & 0 \\
\vdots &  & \ddots & \vdots \\
0 & \ldots & 0 & N_m
\end{array}\right), $$
and $N_j$ is the $\ell \times k_j$ submatrix of $M_j$ obtained by the respective selected columns. Since $M_i$ generates an $[\ell+r_i, \ell]$-MDS code, any $\ell$ columns of $M_i$ are linearly independent. Thus, $\rk(N_i)=k_i$ and $\rk(\widetilde{M})=k_1+\ldots+k_m=k$. By Proposition \ref{prop:MRDCrit} we have that $\det(\widetilde{G}\widetilde{M})\neq 0$, and we conclude the proof using Proposition~\ref{prop:PMDS}. 
\end{proof}

\begin{cor}\label{cor:exPMDS}
Let $m\geq 2 $ and $\ell,r_1,\dots,r_m \geq 1$, $k\geq \ell$  be positive integers. Then, for every prime $p$ and every positive integer $L\geq n_0m\ell$ there exists a $[n,k,\ell,r_1,\ldots,r_m]$-PMDS code over $\F_{p^L}$, where $$n_0=\min\{j \in \N \mid p^j \geq \ell + r_i-1, \mbox{ for } i=1,\ldots,m\}.$$
\end{cor}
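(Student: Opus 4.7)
The plan is a direct application of Theorem \ref{thm16}, which reduces the task to producing two ingredients: a generator matrix $\widetilde{G}\in \F_{q^N}^{k\times m\ell}$ of an MRD code over a suitable extension, and, for each $i$, a generator matrix $M_i\in\F_q^{\ell\times(\ell+r_i)}$ of an $[\ell+r_i,\ell]$-MDS code over a common base field $\F_q$. Once both ingredients are in hand over the same underlying field $\F_{p^L}$, Theorem \ref{thm16} gives the PMDS code for free.

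First I would set $q:=p^{n_0}$. By the very definition of $n_0$ this yields $q\geq \ell+r_i-1$ for every $i\in\{1,\dots,m\}$, which is exactly the classical threshold that guarantees the existence of $[\ell+r_i,\ell]$-MDS codes over $\F_q$ (extended Reed--Solomon codes, say). Pick such a generator matrix $M_i$ for each $i$ and assemble the block-diagonal matrix $M$ of (\ref{eq:MDSblock}).

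Next, write $N:=L/n_0$ and view $\F_{p^L}$ as the extension $\F_{q^N}$. The hypothesis $L\geq n_0 m\ell$ translates into $N\geq m\ell$, which is the standard sufficient condition for the existence of an $\F_{q^N}$-linear MRD code of length $m\ell$ and dimension $k$; a Gabidulin code provides an explicit instance, and its generator matrix plays the role of $\widetilde{G}$. Substituting $\widetilde{G}$ and $M$ into Theorem \ref{thm16} immediately yields that $\widetilde{G}M\in\F_{p^L}^{k\times n}$ generates an $[n,k,\ell,r_1,\dots,r_m]$-PMDS code over $\F_{p^L}$, as required.

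I do not foresee any genuine technical obstacle: the role of $n_0$ in the statement is precisely to tune the base field so that every MDS building block $M_i$ exists, while the factor $m\ell$ in the bound on $L$ encodes the MRD existence constraint $N\geq m\ell$. The only minor bookkeeping point is to verify the embedding $\F_{q^N}\hookrightarrow \F_{p^L}$; this is immediate with the choice $N=L/n_0$, and in any case one can always realize the construction first over $\F_{p^{n_0 m\ell}}$ and then view the resulting generator matrix inside any overfield, since the nonvanishing of the relevant minors (used in the MRD criterion of Proposition \ref{prop:MRDCrit} and in the MDS property of the $M_i$) is preserved under extension of scalars.
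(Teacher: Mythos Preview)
Your proof is correct and follows the same route as the paper: set $q=p^{n_0}$ so that the $[\ell+r_i,\ell]$-MDS codes exist over $\F_q$, use $L\geq n_0 m\ell$ to get $N\geq m\ell$ and hence an MRD code in $\F_{q^N}^{m\ell}$, then invoke Theorem~\ref{thm16}. The paper's proof is the two-line version of exactly this, writing $L=n_0N$ and leaving implicit the same divisibility bookkeeping you flag in your final remark.
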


\begin{proof}
A MRD code in $\Fm^{ m\ell}$ exists if $N\geq m\ell$. Suitable MDS codes over $\F_q$ for the matrix in \eqref{eq:MDSblock} exist if $q\geq \max\{\ell + r_i - 1\}$. 
The statement follows from Theorem \ref{thm16} with $q=p^{n_0}$ and $L=n_0N$.
\end{proof}

\section{Algebraic description of PMDS codes}

We will now define a standard form for generator matrices of PMDS codes. This standard form is the main tool for the random construction of PMDS codes.

\begin{thm}[PMDS standard form]\label{thm:stform}
Let $m\geq 2 $ and $s, \ell,r_1,\dots,r_m \geq 1$ and let $\C$ be a $[n,k=m\ell-s, \ell; r_1,\dots,r_m]$-PMDS code over a field $\F_q$. Then $\C$  has a generator matrix of the form
\begin{equation}\label{eq:stform}
{G}= \left({B}_1 \mid \dots \mid {B}_m \right), 
\end{equation}
where 
\begin{itemize}
\item ${B}_i=({C}_i \mid {D}_i)$, ${C}_i \in \F_q^{k \times \ell}$ and ${D}_i \in \F_q^{k\times r_i}$ for $i=1,\ldots, m$, and
\item
the submatrix ${G}_{C}=\left({C}_1 \mid \dots \mid {C}_m \right)$ is of the form
$${G}_C=\left[ I_{k} \mid A \right],$$
with $A$ being superregular.
\end{itemize}
\end{thm}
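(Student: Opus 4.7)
The plan is to exploit the MDS property that PMDS codes inherit after puncturing, and then reduce to the classical characterization of systematic MDS generator matrices via superregular matrices.

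First I would take any generator matrix of $\C$ in the block form $(B_1 \mid \dots \mid B_m)$ of \eqref{eq:genmatrix}, split each $B_i = (C_i \mid D_i)$ by taking the first $\ell$ columns as $C_i$ and the remaining $r_i$ as $D_i$, and let $G_C = (C_1 \mid \dots \mid C_m)$. The second axiom in the PMDS definition, applied to the erasure pattern that simultaneously removes the last $r_i$ coordinates of each block, says exactly that $G_C$ generates an $[m\ell,k]$-MDS code.

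Next, since that code is MDS, every $k$ columns of $G_C$ are linearly independent; in particular the first $k$ columns form an invertible matrix $M \in \F_q^{k\times k}$. Replacing $G$ by $M^{-1}G$ does not change the row space, so $\C$ is untouched, but the resulting $G_C$ has the form $[I_k \mid A]$ for some $A \in \F_q^{k\times s}$ with $s = m\ell - k$.

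It remains to show that $A$ is superregular. Pick any $j\times j$ submatrix $A_{\bar S, T}$ of $A$, with $\bar S \subseteq \{1,\dots,k\}$ and $T\subseteq\{1,\dots,s\}$, $|\bar S| = |T| = j$. Form the $k\times k$ submatrix of $[I_k \mid A]$ whose columns are the $k-j$ standard basis columns $\{e_i : i \in S\}$, where $S = \{1,\dots,k\}\setminus \bar S$, together with the $j$ columns of $A$ indexed by $T$. Laplace expansion along the rows in $S$ (each of which has a single nonzero entry equal to $1$) shows that the determinant of this submatrix equals $\pm \det(A_{\bar S, T})$. Because $G_C$ generates an MDS code, every $k\times k$ column submatrix of $[I_k \mid A]$ is invertible, hence $\det(A_{\bar S, T}) \neq 0$; as $\bar S, T$ were arbitrary, $A$ is superregular.

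There is no real obstacle in this argument beyond reading the PMDS axiom correctly (the $r_i$ erasures are taken in every block simultaneously, so the punctured code has length $m\ell$) and invoking the classical correspondence between systematic MDS generator matrices and superregular matrices; all other steps are a change of basis and a determinant expansion.
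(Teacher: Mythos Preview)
Your proof is correct and follows essentially the same route as the paper: puncture the last $r_i$ coordinates of each block to obtain an $[m\ell,k]$-MDS code, left-multiply by an invertible matrix to put $G_C$ in systematic form, and invoke the standard equivalence between systematic MDS generator matrices and superregular $A$. The only quibble is in your Laplace-expansion step: the objects with a single nonzero entry are the chosen identity \emph{columns} $e_i$, not the rows indexed by $S$ (those rows also carry entries of $A$); expanding along those columns (or noting the block-triangular structure after reordering) yields $\pm\det(A_{\bar S,T})$ as you claim.
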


\begin{proof}
Let $\widetilde{G}$ be a generator matrix for $\C$ of the form (\ref{eq:genmatrix}), i.e.
$$\widetilde{G}= \left(\widetilde{B}_1 \mid \dots \mid \widetilde{B}_m \right). $$
Puncturing every block $\widetilde{B}_i$ in the last $r_i$ columns, we get that the submatrix
$\widetilde{G}_C$ is the generator matrix of a $[m\ell,k]$-MDS code. Operating on the rows of such a submatrix we can transform it to a matrix $G_C=\left[ I_{k} \mid A \right]$, with $A$ superregular. I.e., there exists an invertible matrix $P \in \mathrm{GL}_k(\Fq)$ such that $P\widetilde{G}_C=\left[ I_{k} \mid A \right]$, and therefore the matrix $G:=P\widetilde{G}$ is a generator matrix of $\C$ of the required form.
\end{proof}

We now consider the entries $a_{w,z}$ of $A$ as variables $x_{w,z}$ for $w=1,\ldots,k$ amd $z=1,\ldots,s$.
We know that the column space of ${D}_i$ is inside the column space of ${C}_i$, by the parameters of the block MDS codes. This means that every column in $\Di$ is a linear combination of the columns of $\Ci$. If we denote by $\Di^{(j)}$ the $j$th column of $\Di$, then
\begin{equation}\label{eq:columns}
\Di^{(j)}=\sum_{t=1}^{\ell}y_{t,i,j}\Ci^{(t)}
\end{equation}
for some $y_{t,i,j}$, which we also consider variable.
This way we can consider a $k\times n$ generator matrix as a matrix in $\F_q[x_{w,z}, y_{t,i,j}]^{k\times n}$ (where $\F_q[x_{w,z}, y_{t,i,j}]$ denotes the polynomial ring in all $x_{w,z}, y_{t,i,j}$). 

Let $R=\sum_{i=1}^m r_i$. We denote $\boldsymbol{\alpha} := (\alpha_{w,z})_{w,z}\in \F_q^{sk}$ and $\boldsymbol{\beta} := (\beta_{t,i,j})_{t,i,j} \in \F_q^{\ell R}$. If we replace the variables $x_{w,z}, y_{t,i,j}$ described above in a matrix in PMDS standard form by the values $\alpha_{w,z}, \beta_{t,i,j}$, we denote the corresponding generator matrix by
$$ G(\boldsymbol{\alpha},\boldsymbol{\beta}). $$
Analogously we will denote the variable form by $ G(\boldsymbol{x},\boldsymbol{y})$.

However, a general matrix of this form is not necessarily a generator matrix of a PMDS code for any values $\boldsymbol{\alpha},\boldsymbol{\beta}$. The following proposition shows what needs to be fulfilled to generate a PMDS code:

\begin{prop}\label{prop:PMDS}
 A matrix $G\in \F_q^{k\times n}$ generates a $[n,k=m\ell-s, \ell; r_1,\dots,r_m]$-PMDS code if and only if, every submatrix in the set
 $$\mathcal T_{k,\ell}(G):=\left\{S\in \F^{k \times k} \mid \begin{array}{l}S \mbox{ is a submatrix of } G \mbox{ with }\\ \mbox{ at most }  \ell \mbox{ columns per block } B_i \end{array}\right\}$$
 has non-zero determinant.
\end{prop}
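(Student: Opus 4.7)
My plan is to prove the equivalence in both directions, using the following bridge: a $k \times k$ submatrix $S \in \mathcal{T}_{k,\ell}(G)$, formed by picking $k_i \le \ell$ columns from block $B_i$ with $\sum_i k_i = k$, is precisely a set of columns that can survive an erasure of exactly $r_i$ columns per block, because $n_i - k_i = \ell + r_i - k_i \ge r_i$ leaves room to remove $r_i$ columns of block $i$ disjoint from those of $S$.

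For the forward direction I assume $G$ generates a PMDS code and fix some $S \in \mathcal{T}_{k,\ell}(G)$. Choosing an erasure pattern that removes $r_i$ columns disjoint from those of $S$ in each block leaves exactly $\ell$ columns per block, namely the $k_i$ columns of $S$ together with $\ell - k_i$ extras. The second PMDS condition says the resulting punctured code is $[m\ell, k]$-MDS, so any $k$ of its columns are linearly independent; in particular $\det S \ne 0$.

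Conversely, assume every $S \in \mathcal{T}_{k,\ell}(G)$ has nonzero determinant. For the $[m\ell, k]$-MDS property after any erasure pattern of the prescribed type, any $k$ columns of the $k \times m\ell$ punctured matrix use at most $\ell$ per block (only $\ell$ remain there) and hence form an element of $\mathcal{T}_{k,\ell}(G)$, which is nonsingular by hypothesis; so any $k$ columns are linearly independent. For the local MDS condition on $B_i$, I extend any $\ell$ chosen columns of $B_i$ by $k - \ell \le (m-1)\ell$ further columns drawn with at most $\ell$ per remaining block; the extended $k \times k$ matrix lies in $\mathcal{T}_{k,\ell}(G)$ and is nonsingular, forcing the $\ell$ columns of $B_i$ to be linearly independent.

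The subtle point I expect to have to address is that the first PMDS condition requires the row space of $B_i$ to have dimension \emph{exactly} $\ell$, i.e.\ $\rk(B_i) = \ell$, while the submatrix condition alone yields only $\rk(B_i) \ge \ell$, since it never inspects more than $\ell$ columns of a single block at once. This is reconciled by the intended context: Proposition~\ref{prop:PMDS} is to be applied to generator matrices already in PMDS standard form, and the paragraph immediately preceding its statement writes each column of $D_i$ as an $\F_q$-linear combination of the columns of $C_i$. Consequently the column space of $B_i = (C_i \mid D_i)$ coincides with that of $C_i$ and has dimension at most $\ell$, giving $\rk(B_i) \le \ell$; combined with the lower bound from the submatrix hypothesis this forces $\rk(B_i) = \ell$, so the row space of $B_i$ is genuinely an $[\ell + r_i, \ell]$-MDS code, completing the verification of both PMDS conditions.
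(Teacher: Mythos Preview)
Your argument is correct and is essentially a careful unpacking of the paper's one-line proof (``This follows from the definition of PMDS, cf.\ also \cite{ht17}''). The forward direction and the MDS-after-erasure half of the converse are exactly what the definition gives, and the paper does not spell out more than that.

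You are also right to flag the $\rk(B_i)=\ell$ issue: as literally stated, the hypothesis that every $S\in\mathcal T_{k,\ell}(G)$ is nonsingular only forces $\rk(B_i)\ge\ell$, not equality. (For instance, with $m=3$, $\ell=1$, $k=2$, $r_i=1$, one can take $B_1=I_2$ and choose $B_2,B_3$ with columns $(1,1)^T,(1,2)^T$ and $(1,3)^T,(1,4)^T$ over a large field; every admissible $2\times 2$ minor is nonzero, yet the row space of $B_1$ has dimension $2\neq\ell$.) Your resolution---that in the intended context the columns of $D_i$ are prescribed to lie in the span of the columns of $C_i$, whence $\rk(B_i)\le\ell$---is exactly the missing ingredient. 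One small refinement: the proposition is also invoked in the proof of Theorem~\ref{thm16}, where $G=\widetilde{G}M$ is not literally in PMDS standard form; but there too each block $B_i$ equals $(\text{$\ell$ columns of }\widetilde{G})\cdot M_i$, so $\rk(B_i)\le\ell$ holds for the same structural reason. Thus both applications in the paper are sound, even though the proposition as a bare statement about arbitrary $G\in\F_q^{k\times n}$ is slightly stronger than what the determinant condition alone delivers.
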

\begin{proof}
This follows from the definition of PMDS, cf. also \cite{ht17}.
\end{proof}

The above results give an algebraic description of the generator matrix of a $[n,k=m\ell-s, \ell; r_1,\dots,r_m]$-PMDS code over $\F_q$, as follows. If we consider the variable form of a generator matrix $G$ as above, and the polynomial 
 \begin{equation}\label{eq:poly}
p(\boldsymbol{x},\boldsymbol{y}):= \mathrm{lcm} \{ \det S \mid S\in\mathcal T_{k,\ell}(G)\} \in  \F_q[x_{w,z}, y_{t,i,j}],
\end{equation}
 then, we have that $G(\boldsymbol{\alpha},\boldsymbol{\beta})$ generates a $[n,k=m\ell-s, \ell; r_1,\dots,r_m]$-PMDS code over $\F_q$ if and only if $p(\boldsymbol{\alpha},\boldsymbol{\beta})$ 
  is non-zero.

\section{Topological and probability results}\label{sec:topprob}

In this section we first deal with the algebraic description of the generator matrix of a PMDS code in the algebraic closure of the finite field where we want our code to be built.
After that, we analyze the probability that a code whose generator matrix is of the form $G(\boldsymbol{\alpha},\boldsymbol{\beta})$ is PMDS. Moreover, we also study the existence of PMDS codes for given parameters $n,k,\ell, s, r_1,\ldots,r_m$ and $R=\sum_{i=1}^m r_i$, giving sufficient conditions on the field size. Although for $s=1$ this problem was completely solved in \cite{ht17}, for  $s\geq 2$ this is still an open problem. 

We denote the set of valid entries for PMDS generator matrices over the algebraic closure of the finite field $\Fq$ by
$$\small{\mathcal A_{\mathrm{PMDS}}:=\left\{(\boldsymbol{\alpha},\boldsymbol{\beta}) \in \bar{\F}_q^{sk}\times \bar{\F}_q^{\ell R}  \mid \mathrm{row space}(G(\boldsymbol{\alpha},\boldsymbol{\beta})) \mbox{ is PMDS}\right\},} $$
Then the following result holds.

\begin{thm}\label{th:gen}
$\mathcal A_{\mathrm{PMDS}}$ is a generic set.
\end{thm}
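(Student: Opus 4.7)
The plan is to leverage the algebraic description already established in equation~(\ref{eq:poly}): the polynomial $p(\boldsymbol{x},\boldsymbol{y})$ cuts out exactly the complement of $\mathcal{A}_{\mathrm{PMDS}}$, so the whole argument reduces to showing that $p$ is not the zero polynomial.

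First I would combine Proposition~\ref{prop:PMDS} with the definition of $p$ to observe that a point $(\boldsymbol{\alpha},\boldsymbol{\beta}) \in \bar{\F}_q^{sk+\ell R}$ lies in $\mathcal{A}_{\mathrm{PMDS}}$ if and only if every submatrix in $\mathcal{T}_{k,\ell}(G(\boldsymbol{\alpha},\boldsymbol{\beta}))$ has non-zero determinant, which is equivalent to $p(\boldsymbol{\alpha},\boldsymbol{\beta}) \neq 0$. Hence
$$\mathcal{A}_{\mathrm{PMDS}} = \bar{\F}_q^{sk+\ell R} \setminus V(\{p\}),$$
which is a Zariski-open subset of $\bar{\F}_q^{sk+\ell R}$ by definition of the Zariski topology.

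Next, to conclude that $\mathcal{A}_{\mathrm{PMDS}}$ is generic, it suffices to exhibit a single point at which $p$ does not vanish, equivalently, to produce one concrete PMDS code whose generator matrix is in PMDS standard form with coefficients in $\bar{\F}_q$. Let $\mathrm{char}(\F_q)$ be the characteristic, so that $\bar{\F}_q$ is the algebraic closure of $\F_{\mathrm{char}(\F_q)}$. By Corollary~\ref{cor:exPMDS} applied to this characteristic, there exists a positive integer $L$ such that a $[n,k,\ell;r_1,\dots,r_m]$-PMDS code exists over a subfield $\F_{p^L} \subseteq \bar{\F}_q$. By Theorem~\ref{thm:stform}, this code admits a generator matrix in PMDS standard form, which produces a concrete specialization $(\boldsymbol{\alpha}_0,\boldsymbol{\beta}_0) \in \bar{\F}_q^{sk}\times\bar{\F}_q^{\ell R}$ with $G(\boldsymbol{\alpha}_0,\boldsymbol{\beta}_0)$ generating a PMDS code; by the previous paragraph, $p(\boldsymbol{\alpha}_0,\boldsymbol{\beta}_0) \neq 0$, so $p$ is not identically zero.

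Consequently $V(\{p\})$ is a proper Zariski-closed subset, and its complement $\mathcal{A}_{\mathrm{PMDS}}$ is a non-empty Zariski-open set, proving the claim. The only potentially delicate step is showing non-emptiness, but this is not really an obstacle here: all the heavy lifting has already been carried out in Section~3, where Theorem~\ref{thm16} and Corollary~\ref{cor:exPMDS} produce an explicit PMDS code over every characteristic. The conceptual point is simply that the random construction in PMDS standard form only needs a single concrete PMDS code to ``seed'' the generic set, and Theorem~\ref{thm:stform} guarantees that any such seed can be put into standard form.
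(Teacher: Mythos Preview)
Your proof is correct and follows essentially the same route as the paper's own proof: identify $\mathcal{A}_{\mathrm{PMDS}}$ as the Zariski-open complement of $V(\{p\})$ via Proposition~\ref{prop:PMDS}, and then establish non-emptiness by invoking Corollary~\ref{cor:exPMDS} together with Theorem~\ref{thm:stform} to produce a concrete point in some $\F_{p^L}\subseteq\bar{\F}_q$ where $p$ does not vanish. The only cosmetic difference is that the paper insists $L$ be a multiple of $t_0$ (where $q=p^{t_0}$), whereas you simply note $\F_{p^L}\subseteq\bar{\F}_q$; since $\bar{\F}_q=\bar{\F}_p$, both are fine.
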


\begin{proof} By Proposition \ref{prop:PMDS} we have that
$$\mathcal A_{\mathrm{PMDS}}:=\left\{(\boldsymbol{\alpha},\boldsymbol{\beta}) \in \bar{\F}_q^{sk}\times \bar{\F}_q^{\ell R} \mid p(\boldsymbol{\alpha},\boldsymbol{\beta})\neq 0 \right\}, $$
and therefore $\mathcal A_{\mathrm{PMDS}}$ is a Zariski open set. 

Concerning the non-emptiness, let $q=p^{t_0}$. From Corollary~\ref{cor:exPMDS} there exists  a $[n,k,\ell,r_1,\ldots,r_m]$-PMDS code $\C$ over $\F_{p^L}$, for some $L$ multiple of $t_0$. By Theorem \ref{thm:stform}, $\C$ has a generator matrix of the form $G(\boldsymbol{\alpha},\boldsymbol{\beta})$, therefore $(\boldsymbol{\alpha},\boldsymbol{\beta})\in \mathcal A_{\mathrm{PMDS}}$.
\end{proof}

This means that over the algebraic closure, by probability $1$, for randomly chosen $\boldsymbol{\alpha},\boldsymbol{\beta}$ the matrix $G(\boldsymbol{\alpha},\boldsymbol{\beta})$ generates a PMDS code. For underlying \emph{finite} fields, this implies that for growing field size this probability will tend to $1$. We now derive a probability formula depending on the field size.

 

We can easily observe that the entries of $G(\boldsymbol{x},\boldsymbol{y})$ are polynomials of total degree $0,1$ or $2$. In particular, if $t:=\lceil\frac{s}{\ell}\rceil$ and we write $G(\boldsymbol{x},\boldsymbol{y})$ as in \eqref{eq:stform}, then the entries of the blocks $\Di$ are polynomials of degree at most $1$ for $i=1,\ldots, m-t$, and of degree at most $2$ for the last $t$ blocks.

To estimate the degree of $p(\boldsymbol{x},\boldsymbol{y}) $ we need the following lemma.
\begin{lem}[Derivation of Vandermonde's identity]\label{lem:Vid}
$$\sum_{j=0}^r (r-j)\binom{m}{j}\binom{n}{r-j}=n\binom{m+n-1}{r-1}.$$
\end{lem}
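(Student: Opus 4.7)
The plan is to recognize this as Vandermonde's convolution with an extra linear factor, and to dispose of that factor via the standard identity that differentiation provides. Concretely, I will use the absorption identity
$$(r-j)\binom{n}{r-j} = n\binom{n-1}{r-j-1},$$
which follows immediately from the definition of binomial coefficients (or equivalently, from the fact that differentiating $(1+x)^n$ gives $n(1+x)^{n-1}$, so the coefficient of $x^{r-j-1}$ equals $n\binom{n-1}{r-j-1}$).

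Once this substitution is made, the sum becomes
$$\sum_{j=0}^r (r-j)\binom{m}{j}\binom{n}{r-j} = n\sum_{j=0}^r \binom{m}{j}\binom{n-1}{r-j-1},$$
and the remaining sum is precisely the Vandermonde convolution for the pair $(m, n-1)$ at index $r-1$, yielding $\binom{m+n-1}{r-1}$. Multiplying by the factor $n$ we pulled out gives the right-hand side.

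Alternatively, I could give a one-line generating-function proof: both sides compute the coefficient of $x^{r-1}$ in $(1+x)^m \cdot n(1+x)^{n-1} = n(1+x)^{m+n-1}$, since the term $(r-j)\binom{n}{r-j}$ is the coefficient of $x^{r-j-1}$ in $\frac{d}{dx}(1+x)^n$, and convolving with the expansion of $(1+x)^m$ collects the factors $\binom{m}{j}$.

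There is no real obstacle here; the only thing to be careful about is the boundary term $j=r$ in the sum, where $(r-j)=0$ kills the summand and simultaneously $\binom{n-1}{r-j-1} = \binom{n-1}{-1} = 0$, so the reindexing on the right-hand side is consistent. I would likely present the first, purely algebraic version since it is the shortest and most self-contained.
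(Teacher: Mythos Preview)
Your proof is correct. The absorption identity $(r-j)\binom{n}{r-j}=n\binom{n-1}{r-j-1}$ holds for all $j$ (with the $j=r$ term vanishing on both sides, as you note), and the resulting sum is exactly Vandermonde's convolution $\sum_{j}\binom{m}{j}\binom{n-1}{(r-1)-j}=\binom{m+n-1}{r-1}$.

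There is nothing to compare against: the paper states this lemma without proof, presumably regarding it as a standard combinatorial identity (indeed it labels it a ``Derivation of Vandermonde's identity''). Your short algebraic argument is precisely the kind of justification one would expect to supply if asked; the generating-function variant you mention is equally valid and essentially the same computation in different clothing.
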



\begin{lem}\label{lem:deg} 
The total degree of the polynomial $p(\boldsymbol{x},\boldsymbol{y})$, defined as in (\ref{eq:poly}), satisfies the inequality
$$\deg p(\boldsymbol{x},\boldsymbol{y}) \leq 2(n-k)\binom{n-1}{k-1}.$$
\end{lem}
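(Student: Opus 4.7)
The plan is to bound $\deg p$ through column-by-column control over the determinants appearing in the lcm. First, I would read off the degrees of the columns of $G(\boldsymbol x,\boldsymbol y)$ in PMDS standard form: the $k$ identity columns of $G_C=[I_k\mid A]$ have degree $0$; the $s$ remaining columns of $G_C$ (those coming from $A$) have degree $1$, since their entries are exactly the variables $x_{w,z}$; and by (\ref{eq:columns}) every column of each $D_i$ block is a linear combination with $y$-variable coefficients of columns of $C_i$, so its entries have degree at most $2$. Altogether $G$ has $k$ columns of degree $0$ and $n-k$ columns of degree at most $2$.

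Next, for any $k\times k$ submatrix $S\in\mathcal T_{k,\ell}(G)$ containing $j$ columns of positive degree and $k-j$ identity columns, I would Laplace-expand $\det S$ along the identity columns: each such column contributes a factor $\pm 1$ and deletes one row, so $\det S$ equals, up to sign, the determinant of a $j\times j$ matrix whose columns are all of degree at most $2$. Multilinearity of the determinant then yields $\deg\det S\leq 2j$. Combining this with the standard bound $\deg\mathrm{lcm}(f_1,\ldots,f_N)\leq \sum_i \deg f_i$, and with the observation that the number of $k$-column subsets of $G$ having exactly $j$ degree-positive columns is at most $\binom{n-k}{j}\binom{k}{k-j}$ (simply ignoring the block constraint defining $\mathcal T_{k,\ell}(G)$, which only removes terms from the sum), one obtains
\[ \deg p(\boldsymbol x,\boldsymbol y) \;\leq\; \sum_{j=0}^{k} 2j\binom{n-k}{j}\binom{k}{k-j}. \]

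The endgame is then Lemma \ref{lem:Vid}. Applying it with $r=k$ and with the roles of $m$ and $n$ in that statement played by $k$ and $n-k$ respectively gives
\[ \sum_{j=0}^{k}(k-j)\binom{k}{j}\binom{n-k}{k-j} \;=\; (n-k)\binom{n-1}{k-1}. \]
The substitution $i=k-j$ shows this sum coincides with $\sum_{j=0}^{k} j\binom{n-k}{j}\binom{k}{k-j}$, so multiplying by $2$ produces exactly $2(n-k)\binom{n-1}{k-1}$. The only real subtlety I expect is the isolation of the identity columns in the Laplace expansion so that they do not inflate the degree estimate to $2k$ instead of $2j$; once that step is handled, everything else is bookkeeping plus the Vandermonde-derivative identity.
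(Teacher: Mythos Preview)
Your argument is correct and follows essentially the same route as the paper: both identify $k$ degree-$0$ columns and $n-k$ columns of degree at most $2$, bound $\deg\det S$ by twice the number of non-identity columns chosen (you via Laplace expansion, the paper implicitly via multilinearity), relax from $\mathcal T_{k,\ell}(G)$ to all $k\times k$ submatrices, and sum using Lemma~\ref{lem:Vid}. Your Laplace-expansion step is slightly more explicit than needed---the Leibniz formula already gives $\deg\det S\le\sum_c\max_r\deg S_{r,c}$---but the structure and the final computation are the same.
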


\begin{proof}
 It holds that 
$$\mathcal T_{k,\ell}(G) \subset \mathcal M_k(G):=\{ S \in \Fq^{k\times k} \mid S \mbox{ is a submatrix of } G\},$$
hence the polynomial $p(\boldsymbol{x},\boldsymbol{y})$ divides the polynomial 
$$q(\boldsymbol{x},\boldsymbol{y}):=\mathrm{lcm}\{\det S \mid S\in \mathcal M_k(G)\}.$$
Observe that the entries of the first $k$ columns of the submatrix ${G}_C$ have degree 0. Let $t:=\lceil\frac{s}{\ell}\rceil$. Then the entries of the columns corresponding to the blocks $\Di$ for $i=1,\ldots, m-t$ have degree at most $1$, as well as the last $m\ell -k$ columns of ${G}_C$. Finally, the columns of the blocks $\Di$ for $i=m-t+1,\ldots, m$, have degree at most $2$. 
In particular, all the entries of the blocks $\Di$ and the last $m\ell -k$ columns of ${G}_C$ have degree at most $2$. Therefore, 
\begin{align*}
\deg q(\boldsymbol{x},\boldsymbol{y}) &\leq \sum_{S \in \mathcal M_k(G)} \deg \det S \\
&\leq \sum_{j_0=0}^k2(k-j_0) \binom{k}{j_0}\binom{n-k}{k-j_0}\\
&=2(n-k)\binom{n-1}{k-1}
\end{align*}
where the last equality follows from Lemma \ref{lem:Vid}. Since $\deg p(\boldsymbol{x},\boldsymbol{y}) \leq \deg q(\boldsymbol{x},\boldsymbol{y})$ we conclude the proof.
\end{proof}

We can now formulate a lower bound for the probability that a randomly chosen generator matrix in PMDS standard form generates a PMDS code over a finite field $\F_q$:

\begin{thm}\label{thm:Prob}
Let the entries of $\boldsymbol{\alpha}$ and $\boldsymbol{\beta}$ be  uniformly and independently chosen at random in $\Fq$. Then
 $$\mathrm{Pr}\{\mathrm{row space}(G(\boldsymbol{\alpha},\boldsymbol{\beta})) \mbox{ is PMDS  }\}\geq 1- \frac{2(n-k)\binom{n-1}{k-1}}{q}.$$
\end{thm}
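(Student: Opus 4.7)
The proof is essentially an assembly of the pieces that have already been set up in the paper, so my plan is to trace through the logic in four short steps.

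First, I would recall from Proposition \ref{prop:PMDS} that the row space of $G(\boldsymbol{\alpha},\boldsymbol{\beta})$ is a $[n,k=m\ell-s,\ell;r_1,\dots,r_m]$-PMDS code if and only if $\det S \neq 0$ for every $S\in \mathcal T_{k,\ell}(G(\boldsymbol{\alpha},\boldsymbol{\beta}))$. Since $p(\boldsymbol{x},\boldsymbol{y})$ is defined in \eqref{eq:poly} as the least common multiple of the determinants of all $S \in \mathcal T_{k,\ell}(G)$, this condition is equivalent to $p(\boldsymbol{\alpha},\boldsymbol{\beta})\neq 0$.

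Second, I need $p(\boldsymbol{x},\boldsymbol{y})$ to be a \emph{non-zero} polynomial in order to invoke Schwartz--Zippel. This is exactly the content of Theorem \ref{th:gen}: the set $\mathcal A_{\mathrm{PMDS}}$ is non-empty (by combining Corollary \ref{cor:exPMDS}, which provides an honest PMDS code over some finite extension of $\F_q$, with Theorem \ref{thm:stform}, which puts its generator matrix into PMDS standard form). Hence there exists at least one point $(\boldsymbol{\alpha}_0,\boldsymbol{\beta}_0)\in \bar{\F}_q^{sk}\times \bar{\F}_q^{\ell R}$ with $p(\boldsymbol{\alpha}_0,\boldsymbol{\beta}_0)\neq 0$, so $p$ is not the zero polynomial.

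Third, I would apply the Schwartz--Zippel Lemma \ref{lem:SZ} with $T=\F_q$ to the polynomial $p(\boldsymbol{x},\boldsymbol{y})$, yielding
\[
\mathrm{Pr}\bigl(p(\boldsymbol{\alpha},\boldsymbol{\beta})=0\bigr)\leq \frac{\deg p(\boldsymbol{x},\boldsymbol{y})}{q}.
\]
Finally, plugging in the degree estimate $\deg p(\boldsymbol{x},\boldsymbol{y})\leq 2(n-k)\binom{n-1}{k-1}$ proved in Lemma \ref{lem:deg}, and taking complements, gives the claimed lower bound.

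There is no real obstacle here: the whole machinery (algebraic characterization via $p$, non-emptiness of $\mathcal A_{\mathrm{PMDS}}$ ensuring $p\not\equiv 0$, the degree bound, and the Schwartz--Zippel inequality) was established in the preceding results. The only subtlety worth flagging explicitly in the write-up is that Schwartz--Zippel requires the polynomial to be non-zero, which is precisely why Theorem \ref{th:gen} is logically needed before this probability bound can be stated.
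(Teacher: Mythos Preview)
Your proposal is correct and follows essentially the same route as the paper: identify the PMDS condition with $p(\boldsymbol{\alpha},\boldsymbol{\beta})\neq 0$, apply Schwartz--Zippel (Lemma~\ref{lem:SZ}), and plug in the degree bound of Lemma~\ref{lem:deg}. Your explicit justification that $p$ is not the zero polynomial via Theorem~\ref{th:gen} is a point the paper leaves implicit, but otherwise the arguments coincide.
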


\begin{proof}
 We have 
\begin{align*}
 &\mathrm{Pr}\{\mathrm{row space}(G(\boldsymbol{\alpha},\boldsymbol{\beta})) \mbox{ is PMDS }\} \\
=&\mathrm{Pr}\{(\boldsymbol{\alpha},\boldsymbol{\beta}) \notin V(p(\boldsymbol{x},\boldsymbol{y});\Fq)\}  \\
=&1-\mathrm{Pr}\{p(\boldsymbol{\alpha},\boldsymbol{\beta})=0\} \\
\geq& 1-\frac{\deg p(\boldsymbol{x},\boldsymbol{y})}{q}\geq 1 - \frac{2(n-k)\binom{n-1}{k-1}}{q},
\end{align*}

where the last two inequalities follow from Lemmas \ref{lem:SZ} and \ref{lem:deg}, respectively.
\end{proof}

From this we can deduce an existence result for PMDS codes over finite fields of a given minimal size.

\begin{cor}
 If $q>2(n-k)\binom{n-1}{k-1}$ then there exists a $[n,k,\ell,r_1,\ldots,r_m]$-PMDS code over the finite field $\Fq$.
\end{cor}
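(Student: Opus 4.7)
The plan is to deduce this corollary directly from Theorem~\ref{thm:Prob} via a probabilistic existence argument. Theorem~\ref{thm:Prob} already establishes that if one draws the entries of $\boldsymbol{\alpha}$ and $\boldsymbol{\beta}$ uniformly and independently from $\F_q$, then the random matrix $G(\boldsymbol{\alpha},\boldsymbol{\beta})$ in PMDS standard form generates a PMDS code with probability at least $1 - \frac{2(n-k)\binom{n-1}{k-1}}{q}$.

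First I would observe that the hypothesis $q > 2(n-k)\binom{n-1}{k-1}$ makes this lower bound strictly positive:
\[
1 - \frac{2(n-k)\binom{n-1}{k-1}}{q} > 0.
\]
Since the probability of the event ``$\mathrm{row\ space}(G(\boldsymbol{\alpha},\boldsymbol{\beta}))$ is PMDS'' is strictly positive, the event is non-empty, so there must exist at least one specific choice of $(\boldsymbol{\alpha},\boldsymbol{\beta}) \in \F_q^{sk} \times \F_q^{\ell R}$ such that $G(\boldsymbol{\alpha},\boldsymbol{\beta})$ is the generator matrix of a $[n,k,\ell,r_1,\ldots,r_m]$-PMDS code over $\F_q$.

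There is no real obstacle here since the work has already been done in Theorem~\ref{thm:Prob}; the corollary is simply the classical probabilistic-method translation of a positive lower bound on probability into an existence statement. The only mildly subtle point to flag is that one should implicitly rely on Theorem~\ref{thm:stform} to know that \emph{every} PMDS code can be put in the standard form on which Theorem~\ref{thm:Prob} operates, so restricting attention to matrices of this shape loses no generality in the existence question. Apart from this remark, the proof is a one-line consequence of Theorem~\ref{thm:Prob}.
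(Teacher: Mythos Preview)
Your argument is correct and is exactly the (implicit) approach of the paper: the corollary is stated without proof as an immediate consequence of Theorem~\ref{thm:Prob} via the probabilistic method. One small remark: your invocation of Theorem~\ref{thm:stform} is unnecessary here, since for \emph{existence} you only need that some matrix in standard form generates a PMDS code (which the positive probability gives), not that every PMDS code admits such a form.
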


One notices that this is not an improvement over the known existence result from Proposition \ref{prop4}.

However, we
can improve the above result, considering a step-by-step construction. We will again consider a generator matrix in PMDS standard form as in \eqref{eq:stform}. We start with an $[m\ell,k]$-MDS code over a finite field $\Fq$ and write its generator matrix as $(C_1 \mid \dots \mid C_m)$. For this purpose it is sufficient that $q\geq m\ell -1$. Then we construct the first column $D_1^{(1)}$ of the block $D_1$ as in (\ref{eq:columns}). Every entry will be a degree $1$ polynomial in the variables $y_{t,1,1}$ for $t=1,\ldots,\ell$. Imposing that every $k\times k$ minor of 
\begin{equation*}G':=\left( {C}_1\mid {D}_1^{(1)} \mid {C}_2\mid \ldots \mid {C}_m \right)\end{equation*}
 is non-zero, we get the condition $p'(y_{t,1,1})\neq 0$, where 
\begin{equation*}p'(y_{t,1,1})=\mathrm{lcm}\{ \det S \mid S \in \mathcal T_{k,\ell}(G')\}.\end{equation*}
 Using Lemma \ref{lem:SZ}, we obtain
\begin{equation*}\mathrm{Pr}\{p'(\beta_{t,1,1})= 0\}\leq \frac{\deg p'}{q}.\end{equation*}
In this situation $\deg p' \leq \binom{m\ell}{k-1}$, therefore for $q> \binom{m\ell}{k-1}$ we have that there exists at least one evaluation of $p'$ that is non-zero and such that $G'(\beta_{t,1,1})$ generates a $[n,k,\ell, 1,0,\ldots,0]$-PMDS code.

Repeating this construction step by step, we get a $[n,k,\ell,r_1,\ldots,r_{m-1},r_{m}-1]$-PMDS code. From that code we build the last column ${D}_m^{(r_m)}$ of the block ${D}_m$ again as in (\ref{eq:columns}): 
\begin{equation*}{D}_m^{(r_m)}=\sum_{t=1}^{\ell}y_{t,m,r_m}{C}_m^{(t)}.\end{equation*}
In the end we get the matrix

\begin{equation}\label{eq:G}
{G}(y_{t,m,r_m})=({C}_1\mid {D}_1 \mid \ldots \mid {C}_m \mid {D}_m ),
\end{equation}
where the matrix ${G}(y_{t,m,r_m})$ without the last column generates a $[n,k,\ell,r_1,\ldots,r_{m-1},r_{m}-1]$-PMDS code, and the entries of the last column are polynomials of total degree at most $1$ in the variables $y_{t,m,r_m}$, for $t=1,\ldots,\ell$. 

\begin{defi}
Let $m, n, k, n_1,\ldots,n_m, f_1,\ldots,f_m$ be  positive integers such that $n=\sum_i n_i$. Let $N_0:=0$, $N_i:=\sum_{j=1}^i n_j$ and $J_i=\{N_{i-1}+1,\ldots,N_i\}$ for $i=1,\ldots,m-1$. We define the set
\begin{equation*}
\mathcal M(k;n_1\ldots,n_m;f_1,\ldots,f_m)
=\> \left\{I \subset \{1,\ldots,n\}\mid |I|=k, |I\cap J_i|\leq f_i\right\}
\end{equation*}
and $M(k;n_1\ldots,n_m;f_1,\ldots,f_m)$ 
 as its cardinality.
\end{defi}

\begin{prop}
 Let ${G}(y_{t,m,r_m})$ be as in (\ref{eq:G}). The total degree of the  polynomial 
\begin{equation*}\tilde{p}(y_{t,m,r_m}):=\mathrm{lcm}\{\det S \mid S\in \mathcal T_{k,\ell}({G}(y_{t,m,r_m}))\}\end{equation*} is less or equal to
\begin{equation*} M(k-1;\ell+r_1\ldots,\ell+r_{m-1},\ell+r_m-1;\ell,\ldots,\ell,\ell-1) =: M^*.\end{equation*}
\end{prop}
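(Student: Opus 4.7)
The plan is to exploit the fact that in $G(y_{t,m,r_m})$ only the last column ${D}_m^{(r_m)}$ carries variables; all other entries have already been fixed to scalars in $\F_q$ during the earlier steps of the step-by-step construction, and in fact the first $n-1$ columns form a generator matrix of a $[n-1,k,\ell;r_1,\ldots,r_{m-1},r_m-1]$-PMDS code. This separation into ``frozen'' and ``variable'' columns is what makes the bound linear in $M^*$.

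First I would partition the family $\mathcal T_{k,\ell}(G(y_{t,m,r_m}))$ of candidate $k\times k$ submatrices into two classes according to whether the last column of $G(y_{t,m,r_m})$ is selected. If it is not selected, then $S$ consists entirely of constants and $\det S$ is itself a constant; moreover it is \emph{nonzero}, because such $S$ is precisely a submatrix in $\mathcal T_{k,\ell}$ of the already-built PMDS code, and Proposition \ref{prop:PMDS} guarantees nonvanishing of its determinant. Hence these submatrices contribute only nonzero scalar factors to the lcm and do not raise its degree. If instead $S$ does contain the last column, then expanding $\det S$ along that column (whose entries are degree-$\leq 1$ polynomials in the $y_{t,m,r_m}$, while the cofactors are constants) gives a polynomial of total degree at most $1$.

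Next I would use the elementary bound $\deg(\mathrm{lcm}(f_1,\ldots,f_N)) \leq \sum_i \deg f_i$ applied to the polynomials $\det S$ with $S$ in the second class, so that
\begin{equation*}
\deg \tilde p(y_{t,m,r_m}) \leq \#\bigl\{S\in\mathcal T_{k,\ell}(G(y_{t,m,r_m})) : S\text{ contains the last column}\bigr\}.
\end{equation*}
The final step is simply to count this set. Fixing the last column of $G$ as a chosen column forces us to pick $k-1$ additional columns from the remaining $n-1$ columns, which are organized in blocks of sizes $\ell+r_1,\ldots,\ell+r_{m-1},\ell+r_m-1$. The $\mathcal T_{k,\ell}$ constraint of at most $\ell$ columns from each block translates, on the remaining $k-1$ chosen columns, to at most $\ell$ from each of blocks $1,\ldots,m-1$ and at most $\ell-1$ from block $m$ (one slot of block $m$ having already been used). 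By definition this count is exactly $M(k-1;\ell+r_1,\ldots,\ell+r_{m-1},\ell+r_m-1;\ell,\ldots,\ell,\ell-1)=M^*$, which gives the desired inequality.

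No step looks like a real obstacle; the only mild subtlety is to observe that one may bound the degree of an lcm by the sum of degrees of the arguments (rather than having to analyze common factors), which suffices here because each contributing polynomial has degree at most $1$.
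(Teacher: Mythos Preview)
Your argument is correct and follows essentially the same approach as the paper: partition the submatrices by whether they contain the last column, observe that those without it contribute nonzero constants while those with it contribute polynomials of degree at most $1$, bound the degree of the lcm by the sum of degrees, and count the latter class as $M^*$. Your write-up is in fact more explicit than the paper's (you spell out the cofactor expansion and the combinatorial justification of the count), but the logic is identical.
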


\begin{proof}
The polynomial $\tilde{p}(y_{t,m,r_m})$ has degree less or equal to $\sum \deg \det S$. By assumption all the determinants $\det S$ for $S$ not containing the last column are non zero elements in $\Fq$. The only polynomials with degree $1$ are the determinants of $k\times k$ submatrices involving the last column, and they are exactly $M^*$ many.
\end{proof}

\begin{cor}\label{cor25}
If $q>M^*$ then there exists a $[n,k,\ell,r_1,\ldots,r_m]$-PMDS code over the finite field $\Fq$.
\end{cor}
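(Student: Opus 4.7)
The plan is to prove this by induction on the number of $D$-columns already appended to the initial MDS ``skeleton'' $(C_1 \mid \dots \mid C_m)$, following exactly the step-by-step construction laid out immediately before the corollary statement. The base case is the existence of a $[m\ell,k]$-MDS code with generator matrix $(C_1 \mid \dots \mid C_m)$, which exists over $\Fq$ whenever $q \geq m\ell -1$; this is implied by $q > M^*$ since $M^*$ is readily seen to be at least $m\ell-1$.

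For the inductive step, assume we have built a $[\,\cdot\,,k,\ell,r_1,\ldots,r_{i-1},j-1,0,\ldots,0]$-PMDS code and wish to extend it by appending the column $\Di^{(j)} = \sum_{t=1}^{\ell} y_{t,i,j}\Ci^{(t)}$. The $k\times k$ PMDS-admissible submatrices that do \emph{not} involve this new column have nonzero determinants by the inductive hypothesis. The new admissible submatrices are exactly those containing the new column, and expanding the determinant along it shows each such minor is a polynomial of total degree at most $1$ in the variables $y_{t,i,j}$, $t=1,\ldots,\ell$. Their $\mathrm{lcm}$ is thus a polynomial whose degree is bounded by the \emph{number} of such admissible submatrices; this number is exactly $M(k-1;\,n_1,\ldots,n_m;\,f_1,\ldots,f_m)$ for the appropriate configuration of block sizes and caps at step $(i,j)$.

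The key observation is that the parameters at step $(i,j)$ always give a count bounded above by $M^*$: the ``final'' configuration $(\ell+r_1,\ldots,\ell+r_{m-1},\ell+r_m-1;\,\ell,\ldots,\ell,\ell-1)$ from the proposition has the most columns per block among all steps, and one checks by a direct subset-counting argument (comparing term-by-term in the splitting by $|I\cap J_{i'}|$) that any intermediate configuration yields a smaller or equal value of $M(\cdot;\cdot;\cdot)$. Granting this, the Schwartz-Zippel lemma (Lemma \ref{lem:SZ}) applied to the $\mathrm{lcm}$ polynomial of degree at most $M^*$ gives
\[
\mathrm{Pr}\{\tilde p(\bs \beta)=0\} \leq \frac{M^*}{q} < 1
\]
whenever $q > M^*$, so there exists an evaluation of the new variables in $\Fq$ which preserves PMDS at every step. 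Chaining the choices through all $R = \sum_i r_i$ extension steps produces a $[n,k,\ell,r_1,\ldots,r_m]$-PMDS code over $\Fq$, completing the proof.

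The main obstacle is the monotonicity claim that the degree bound at each intermediate step is indeed at most $M^*$. While intuitively clear (more columns available plus the same or looser caps only adds admissible subsets, and the one delicate case — block $m$ going from $\ell$ columns with cap $\ell$ to $\ell+r_m-1$ columns with cap $\ell-1$ — is handled by checking that the binomial gains in the smaller-$s$ terms dominate the lost $s=\ell$ term), a careful argument is needed. Everything else is essentially bookkeeping combined with an inductive application of Schwartz-Zippel.
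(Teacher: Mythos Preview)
Your approach is exactly the one the paper intends: the corollary has no written proof in the paper and is meant to follow from the step-by-step construction together with the degree bound of the preceding proposition and the Schwartz--Zippel lemma. Your write-up is in fact more careful than the paper, because you isolate the one point the paper passes over in silence, namely that the degree bound $M^*$ obtained for the \emph{last} extension step dominates the degree bound at every earlier step.

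That monotonicity concern is genuine, and your instinct that the ``delicate case'' is the transfer of the cap $\ell-1$ from block $i$ to block $m$ is correct. It can be settled cleanly by an explicit injection rather than a termwise binomial comparison: identify the columns present at step $(i,j)$ with a subset of the columns present at the final step; an admissible $I$ at step $(i,j)$ with $|I\cap J_m|\le \ell-1$ is already admissible for the final configuration, while if $|I\cap J_m|=\ell$ (so $I$ contains the entire intermediate block $m$) replace the last column of block $m$ by the $(\ell+r_i)$-th column of block $i$, which exists at the final step but not at step $(i,j)$ and hence is not in $I$. One checks this gives a final-admissible set (block $i$ now has at most $\ell$ entries, block $m$ has $\ell-1$), and the map is injective because modified images are distinguished from unmodified ones by the presence of that new block-$i$ column. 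This yields the required inequality between the intermediate count and $M^*$. Your base-case remark that $q>M^*$ forces $q\ge m\ell-1$ also goes through, since $M^*$ already dominates the count at the very first step, which is at least $\binom{m\ell}{k-1}-\binom{(m-1)\ell}{k-1-\ell}\ge m\ell-1$ in the nondegenerate range $k\ge 2$.
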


To our knowledge there is no closed formula for $M^*$. However, it is easy to see that $M^* \leq \binom{n-1}{k-1}$ and that the inequality is strict if any of the conditions $|I\cap J_i|\leq f_i$ is non-empty. Hence, Corollary \ref{cor25} improves upon Proposition~\ref{prop4}.

\section{Conclusion}

We gave a generalization of a known PMDS code construction based on rank-metric codes. Furthermore, we investigated a random construction of PMDS codes by prescribing a PMDS standard form. We derived a lower bound on the probability that a randomly filled matrix in PMDS standard form generates a PMDS code. This probability implies a lower bound on the field size needed for such codes to exist. In the end we gave a step-by-step construction of such a generator matrix to improve this lower bound on the necessary field size.




\bibliographystyle{IEEEtran}
\bibliography{PMDS_stuff}

\end{document}